%
\documentclass[runningheads,envcountsame]{llncs}
\usepackage[T1]{fontenc}
%
\usepackage{graphicx}
%
\usepackage{hyperref}
\usepackage{color}

\urlstyle{rm}
%

\usepackage{amsmath,amssymb}

\usepackage{mathtools}
\usepackage{todonotes}
\usepackage{booktabs}
\newcommand{\VAR}[0]{\mathrm{VAR}}

\newcommand{\FO}[0]{\mathrm{FO}}
\newcommand{\MC}[0]{\mathrm{MC}}

\newcommand{\PMC}[0]{\mathrm{p\text-MC}}

\newcommand{\simple}[0]{\mathrm{simple\text-}}

\newcommand{\SSS}{\mathrm{SUBSET\text-SUM}}
\newcommand{\ALTSSS}[0]{\mathrm{ALT}_3\SSS}
\newcommand{\pALTSSS}[0]{\mathrm{p\text-ALT}_3\SSS}
\newcommand{\ALTTSSS}[0]{\mathrm{ALT}_\ell\SSS}
\newcommand{\pALTTSSS}[0]{\mathrm{p\text-ALT}_\ell\SSS}
\newcommand{\coALTTSSS}[0]{\mathrm{CO\text-ALT}_\ell\SSS}
\newcommand{\pcoALTTSSS}[0]{\mathrm{p\text-CO\text-ALT}_\ell\SSS}
\newcommand{\Fr}{\mathrm{Fr}}
\newcommand{\dom}{\mathrm{dom}}
\newcommand{\calA}{\mathcal{A}}
\newcommand{\leqfpt}{\leq^{\mathrm{fpt}}}
\newcommand{\A}[1]{\mathrm{A}[#1]}

\spnewtheorem{myclaim}{Claim}{\itshape}{\itshape}

\begin{document}
\title{A SUBSET-SUM Characterisation of the A-Hierarchy}
%
%
\author{Jan Gutleben \and
Arne Meier\orcidID{https://orcid.org/0000-0002-8061-5376}}
\authorrunning{J. Gutleben and A. Meier}
%
\institute{Institut für Theoretische Informatik, Leibniz Universität Hannover\\
Appelstrasse 9a, 30167 Hannover, Germany\\
\email{\{gutleben,meier\}@thi.uni-hannover.de.de}}
\maketitle              
\begin{abstract}
The A-hierarchy is a parametric analogue of the polynomial hierarchy in the context of paramterised complexity theory. 
We give a new characterisation of the A-hierarchy in terms of a generalisation of the SUBSET-SUM problem.

\keywords{Parameterized Complexity Theory \and A-Hierarchy \and SUBSET-SUM \and Alternation}
\end{abstract}
\section{Introduction}
Classical worst-case complexity relates the difficulty of a problem in general to its input length.
In particular, when dealing with intrinsically hard problems, the input length is a very coarse measure and does not tell much about the ``structure'' of the problem. 
That is why parameterised complexity theory was introduced~\cite{DBLP:series/mcs/DowneyF99}. 
Here, one considers the complexity of a problem with respect to a \emph{parameter} and aims for a more fine-grained complexity analysis. 
Such parameterised problems can be formalised as tuples $(Q,\kappa)$ where $Q$ is a decision problem and $\kappa$ is a polynomial-time computable function, the \emph{parameterisation}.
This function maps inputs to often natural numbers that are seen as the parameter of the problem.
For instance, the parameter could be the size of a vertex cover in a graph or the number of variables in a formula. 
Depending on the chosen parameter, the runtime to solve the problem can now be formulated as a function of the parameter \emph{and} the input length.
Of course, it is important to study parameters that are relevant to practical scenarios; from the above, the number of variables in a formula is probably not such a good choice, as it is usually neither constant nor growing slowly with the input length.

A problem is said to be fixed-parameter tractable (FPT) if it can be solved in time $f(\kappa(x))\cdot n^{O(1)}$ for all inputs $x$ with $|x|=n$ for some computable function $f$. 
There exist different hierarchies above FPT, such as the W- and the A-hierarchy. 
Hardness for base classes of these hierarchies corresponds to intractability of the problem in the parameterised world, while membership in FPT is seen as tractability.
The W-hierarchy is defined in terms of weighted circuit satisfiability problems and \emph{weft}, which is the maximum number of large gates from an input to the input in a circuit~\cite{DBLP:journals/tcs/DowneyFR98}. 
The initial definition of the A-hierarchy was given via a short halting problem of alternating single-tape Turing machines~\cite{DBLP:journals/siamcomp/FlumG01}. 
Yet, further machine characterisations are known~\cite{DBLP:journals/tcs/ChenFG05}. 
The bridge to predicate logic through the model-checking problem for $\Sigma_\ell$-formulas (predicate formulas in prenex normal form with $\ell$ alternations) with bounded arity relations was established by Flum and Grohe~\cite{DBLP:journals/siamcomp/FlumG01}. 
The next step to unbounded arity relations is also due to Flum and Grohe~\cite{DBLP:journals/lmcs/FlumG05} (see, in the preliminaries section, Lemma~\ref{normalisation lemma}).

Interestingly to note, the A-hierarchy can be seen as a direct parametric analogue of the polynomial hierarchy. 
That is why this hierarchy is also of indepentent interest as it bridges the gap between classical complexity theory and parameterised complexity theory.

\paragraph{Contributions.} 
In this work, we will give a new characterisation of the A-hierarchy in terms of generalised SUBSET-SUM problems. 
As a start, we will show that a certain variant of the SUBSET-SUM problem is complete for the third level of the A-hierarchy. 
Furthermore, we will generalise this problem in a way to find complete ones for every level of the A-hierarchy. 
The benefit of this characterisation is that these SUBSET-SUM problem variants can be seen more natural than generic machine problems or model checking problems in predicate logic.

\paragraph{Organisation.}
At first, we will give a brief introduction to notions in predicate logic and parameterised complexity theory. 
Afterwards, we layout notions of alternating random access machines (ARAMs) and tail-nondeterminism. 
In the main part, we will give a thorough proof of $\A3$-completeness of a certain variant of the SUBSET-SUM problem. 
Finally, we will generalise this problem to arbitrary alternations and explain how the previous proof can be generalised to show completeness for every level of the A-hierarchy.
We finish with an summary and an outlook.

\section{Preliminaries}
We assume basic familiarity with the concepts of computational complexity theory~\cite{DBLP:books/daglib/0018514,komp}. 

\paragraph{Predicate Logic.}
We will give a brief introduction into first-order logic~\cite{DBLP:books/daglib/0018121}.
Here, let $\tau$ be a \emph{first-order vocabulary} consisting of function symbols and relation symbols with an equality symbol `$=$'. 
Denote by $\VAR$ a countably infinite set of \emph{first-order variables}. 
Terms over $\tau$ are defined as usual. The set of first-order logic ($\FO$) is defined as
\[
	\psi \Coloneqq
	t_1 =t_2\mid 
	R(t_1,\dots,t_k)\mid
	\lnot R(t_1,\dots,t_k)\mid
	\psi\land\psi\mid
	\psi\lor\psi\mid
	\exists x\psi\mid
	\forall x\psi,
\]
where $t_i$ are terms for $1\leq i\leq k$, $R$ is a $k$-ary relation symbol from $\sigma$, $k\in\mathbb N$, and $x\in\VAR$.
Let us denote by $\VAR(\psi)$ the \emph{set of variables} in a formula $\psi$ and by $\Fr(\psi)$ the \emph{set of free variables} in $\psi$.
Regarding semantics, we consider $\FO$-formulas in the context of \emph{$\tau$-structures}. 
These are pairs $\calA=(A,\tau^\calA)$, where $A$ is the \emph{domain} of $\calA$ (when clear from the context, we write $A$ instead of $\dom(\calA)$). 
Here, $\tau^\calA$ corresponds to an interpretation of the function and relational symbols in the usual way (for instance, $t^\calA\langle s\rangle=s(x)$ if $t=x\in\VAR$).
For tuples of terms $\mathbf t=(t_1,\dots,t_n)$ for $n\in\mathbb N$, we write $\mathbf t^\calA\langle s\rangle$ for $(t_1^\calA\langle s\rangle, \dots, t_n^\calA\langle s\rangle)$.
We say an intrepretation $\calA$ \emph{models} a formula $\psi$, $\calA\models\psi$ in symbols, if $\psi$ evaluates to true under~$\calA$.
The \emph{model checking} problem in first-order logic is defined as
\[
\MC(\Phi) \coloneqq \left\{\,\langle\mathcal{A},\varphi\rangle \mid \mathcal{A} \text{ is a structure}, \varphi \in \Phi \text{ and } \mathcal{A} \models \varphi \,\right\}.
\]
In the following, we define important formula classes.
A quantifier-free formula is in $\Sigma_0$ or also in $\Pi_0$. 
Let $t,n \in \mathbb{N}_0$. 
Then, we let for $\varphi \in \Pi_t$ and $\psi \in \Sigma_t$:
\begin{align*}
    &\exists x_1 \, \exists x_2 \, \cdots \exists x_n \, \varphi \in \Sigma_{t+1},\qquad \forall x_1 \, \forall x_2 \, \cdots \forall x_n \, \psi \in \Pi_{t+1} 
\end{align*}
If $\Phi$ is a class of formulas, then $\Phi[r]$ is the subclass of formulas in $\Phi$ that contain at most $r$-ary relations.
Furthermore, a $\Sigma_t$-formula is \emph{simple} if its quantifier-free part is a conjunction of atoms if $t$ is odd, and a disjunction of atoms if $t$ is even. 
Now, denote by $\simple\Sigma_t$ the class of all simple $\Sigma_t$-formulas. 
Additionally, denote by $\simple\Sigma_t^+$ the subclass of $\simple\Sigma_t$ formulas that are negation-free, while $\simple\Sigma_t^-$ is the subclass of $\simple\Sigma_t$ formulas where each atomic formula is negated.

\paragraph{Parameterised Complexity Theory.}
For an introduction to the field of parameterised complexity theory, we refer the reader to the textbook by Flum and Grohe~\cite{Para}, or the one of Downey and Fellows~\cite{DBLP:series/txcs/DowneyF13,DBLP:series/mcs/DowneyF99}.

A \emph{parameterisation} w.r.t.\ an alphabet $\Sigma$ is a function $\kappa \colon \Sigma^* \rightarrow \mathbb{N}^+$ which is computable in polynomial time.
A \emph{parameterised problem} is a pair $(Q,\kappa)$, where $Q \subseteq \Sigma^*$ is a decision problem and $\kappa$ is a parameterisation.

\begin{definition}
    Let $\Sigma$, $\Gamma$ be alphabets, and $(A,\kappa)$ and $(B,\iota)$  be parameterised problems with $A \subseteq \Sigma^*$ and $B \subseteq \Gamma^*$. 
    Then a function $f \colon \Sigma^* \rightarrow \Gamma^* $ is called an \emph{FPT-reduction}, written $(A,\kappa) \leqfpt (B,\iota)$, if the following is true for all $x\in\Sigma^*$:
    \begin{itemize}
        \item $x \in A \Leftrightarrow f(x) \in B$
        \item There is a computable function $g$ and a polynomial $p$ such that $f$ can be computed in $g(\kappa(x)) \cdot p(x)$ steps.
        \item There exists a computable function $h\colon \mathbb{N}^+ \rightarrow \mathbb{N}^+$, such that $\iota(f(x)) \leq h(\kappa(x))$.
    \end{itemize}
\end{definition}
If $(Q,\kappa)$ is a parameterised problem, then we write $[(Q,\kappa)]^{\textrm{fpt}}$ for the class of all parameterised problems $(Q',\kappa')$ such that $(Q',\kappa') \leqfpt (Q,\kappa)$. 
Intuitively, this is known as the \emph{fpt-closure} of $(Q,\kappa)$.

The parameterised model checking problem is defined as follows:
\[
    \PMC(\Sigma_t) \coloneqq (\MC(\Sigma_t), \langle \mathcal{A},\varphi \rangle \mapsto |\varphi|).
\]
Finally, we are ready to define the A-hierarchy. 
For all $t \in \mathbb{N}^+$ let
    $A[t] \coloneqq [\PMC(\Sigma_t)]^{\textrm{fpt}}$.
The union of these classes, $\bigcup_{t \in \mathbb{N}} A[t]$, is called the \emph{A-hierarchy}.

The following lemma states that for the parameterised model checking problem for $\Sigma_t$ formulas one can assume the normal form of positive simple $\Sigma_t$-formulas.

\begin{lemma}[{\cite[L.~8.10]{Para}}] \label{normalisation lemma}
    For all $t \in \mathbb{N}^+$, 
    $\PMC(\Sigma_t) \leqfpt \PMC(\simple\Sigma_t^+[2])$.%
\end{lemma}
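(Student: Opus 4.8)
\emph{Proof plan.} The plan is to obtain the reduction as a composition of three normalisation steps, each itself an FPT-reduction, turning an instance $\langle\mathcal A,\varphi\rangle$ with $\varphi\in\Sigma_t$ into an instance $\langle\mathcal A',\varphi'\rangle$ with $\varphi'\in\simple\Sigma_t^+[2]$ and $\mathcal A\models\varphi$ iff $\mathcal A'\models\varphi'$. In each step the new formula will be computed from the old one alone, so $|\varphi'|$ — the parameter of the produced instance — is bounded by a computable function of $|\varphi|$, and every manipulation of the structure runs in time $\|\langle\mathcal A,\varphi\rangle\|^{O(1)}$ times a function of $|\varphi|$; these two remarks take care of the two side conditions in the definition of an FPT-reduction, leaving only the preservation of satisfaction to be verified.

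\emph{Step 1: make the quantifier-free part simple.} Write $\varphi=\exists\bar x_1\forall\bar x_2\cdots Q_t\bar x_t\,\theta$ with $\theta$ quantifier-free. First I would eliminate function symbols in the standard way, introducing for every subterm $f(\bar t)$ a fresh variable pinned to its value by a graph relation added to the structure and bound by $\exists$ if $t$ is odd and by $\forall$ if $t$ is even — both variants are sound since the value is uniquely determined — so that $\theta$ becomes a Boolean combination of relational atoms $\alpha_1,\dots,\alpha_m$. Let $B\colon\{0,1\}^m\to\{0,1\}$ be the Boolean function with $\theta$ true exactly when $B$ returns $1$ on the truth values of $\alpha_1,\dots,\alpha_m$. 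Enlarge the structure by two fresh elements $\mathbf 0,\mathbf 1$, by an $m$-ary relation $\mathrm{GOOD}$ (resp.\ $\mathrm{BAD}$) collecting the satisfying (resp.\ the falsifying) tuples of $B$ over $\{\mathbf 0,\mathbf 1\}^m$, and, for each $\alpha_i=R(\bar y_i)$, by a relation $\widehat\alpha_i$ one arity wider than $R$ whose extra argument records the truth value of $R$ on the others. Then replace $\theta$ by $\exists\bar z\,\bigl(\mathrm{GOOD}(\bar z)\wedge\bigwedge_i\widehat\alpha_i(\bar y_i,z_i)\bigr)$ when $t$ is odd, and by the dual $\forall\bar z\,\bigl(\lnot\mathrm{BAD}(\bar z)\vee\bigvee_i\lnot\widehat\alpha_i(\bar y_i,z_i)\bigr)$ when $t$ is even; absorbing the block $\bar z$ into $\bar x_t$ keeps the prefix in $\Sigma_t$, and the matrix is now a conjunction of positive atoms (odd $t$) or a disjunction of negated atoms (even $t$), i.e.\ we have reached $\simple\Sigma_t^+$ or $\simple\Sigma_t^-$.

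\emph{Steps 2 and 3: reduce arity and remove negations.} Next, for every relation symbol $R$ of arity $k$ still present and every tuple actually listed in $R^{\mathcal A'}$, I would add one fresh ``record'' element and binary projection relations $\pi^R_1,\dots,\pi^R_k$ linking it to the components of the tuple it represents; since only listed tuples are used, polynomially many new elements suffice, and then $R(x_1,\dots,x_k)\equiv\exists v\,\bigwedge_j\pi^R_j(v,x_j)$ and $\lnot R(x_1,\dots,x_k)\equiv\forall v\,\bigvee_j\lnot\pi^R_j(v,x_j)$. Substituting these into the simple matrix, pulling the new quantifiers in front of the matrix and again merging them into $\bar x_t$ according to the parity of $t$, the formula stays simple and uses only binary relations; it is already positive for odd $t$, while for even $t$ one last pass replaces every literal $\lnot S(x,x')$ by a fresh binary relation $\overline S$ interpreted as the complement of $S$, which is polynomial-time computable now that the arity is the constant $2$. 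Either way the result lies in $\simple\Sigma_t^+[2]$, as required.

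\emph{Main obstacle.} The delicate point throughout will be keeping the number of quantifier alternations at exactly $t$: every rewriting above spawns fresh quantifiers that land at the bottom of the prefix, i.e.\ possibly inside a universal block, and a careless choice would push $\Sigma_t$ up to $\Sigma_{t+1}$. The remedy — and the real technical content — is to dualise each gadget according to the parity of $t$ (existential gadgets when the innermost block is $\exists$, universal gadgets when it is $\forall$, which is legitimate precisely because each gadget variable is functionally determined by the structure) and to postpone the elimination of negations to the very end, after the arity has already been brought down to the constant $2$, so that passing to complement relations is cheap.
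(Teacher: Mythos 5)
The paper does not prove this statement: it is quoted from Flum and Grohe's textbook (Lemma~8.10 there), and the only surrounding commentary is a one-line remark on padding unary relations with a dummy argument to pass from ``at most binary'' to ``exactly binary''. Your attempt is thus a reconstruction of the textbook argument rather than of anything the paper contains, so there is no in-paper proof to compare against.

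Your outline has the right skeleton and your ``Main obstacle'' paragraph correctly names both dangers --- alternation control, and the cost of complement relations --- but Step~1 as written violates the second one. The relation $\widehat\alpha_i$ is required to ``record the truth value of $R$'', so it must contain $(\bar a,\mathbf 0)$ for every $\bar a\in A^k\setminus R^{\mathcal A}$. At the point where you build $\widehat\alpha_i$ the arity $k$ has not yet been reduced and can be as large as $|\varphi|$, so listing this complement takes $\Omega(|A|^k)$ time, which is not of the form $g(|\varphi|)\cdot|\langle\mathcal A,\varphi\rangle|^{O(1)}$: the exponent on the instance size grows with the parameter. Step~2 then inherits the blow-up, since it spawns one record element per tuple of $\widehat\alpha_i$. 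In effect you eliminate negations before reducing arity --- the very pitfall your closing paragraph warns against. A repair must reorder the steps: reduce arity first on the original structure (touching only tuples actually listed, so no complements are needed), pass to complement relations once arity is the constant~$2$, and only then simplify the matrix; even then the $m$-ary $\mathrm{GOOD}$ forces a second round of arity reduction over the small domain $\{\mathbf 0,\mathbf 1\}$, and for even $t$ the universal quantifiers created by $\lnot R(\bar x)\equiv\forall v\,\bigvee_j\lnot\pi^R_j(v,x_j)$ land inside a disjunction and cannot be prenexed without an extra indexing gadget, which is exactly where the textbook proof expends most of its care.
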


Notice that we will use a strengthened version of the previous lemma. 
Here, only binary relations appear in the formula. 
This is achieved via simulating the unary relations by binary relations with a dummy variable.

\paragraph{ARAMs and Tail-Nondeterminism.}
In the following, we introduce the necessary notions to characterise the class $\mathrm{A}[t]$ in terms of ARAMs. 
ARAMs are a generalisation of RAMs that allow for nondeterministic behaviour.
First, we start with classical RAMs and follow the notion of Flum and Grohe~ \cite[pp.\ 457]{Para}.
A \emph{Random Access Machine} (RAM) consists of countably infinite many registers $R_0, R_1 \dots$, a finite sequence of instructions $I_1, \dots I_n$, and a program counter that contains a number from $\mathbb{N}^+$.
Registers may contain numbers from $\mathbb{N}_0$. 
The register $R_0$ is called the \emph{accumulator}. 
We allow arithmetic instructions: ADD, SUB (cut off at $0$), or DIV2 (rounded off) with the usual semantics (notice that arithmetic instructions are executed in $O(1)$ time). 
The first registers serve for inputs ($x=x_1\dots x_n$ is in $R_1,\dots,R_n$) and outputs. 
Acceptance/rejection for decision problems is then via storing $1/0$ in $R_0$. 
The runtime of algorithms on such machines is then measured with respect to length of input and the number of instructions carried out (no matter how large the involved numbers are).

\begin{definition}
    An \emph{alternating Random Access Machine} (ARAM) is a RAM with two additional GUESS instructions: EXISTS and FORALL.
\end{definition}
An ARAM \emph{accepts} an input if for every FORALL instruction the machine accepts on every possible number equal or less than the number in the \emph{accumulator}, and for every EXISTS instruction there is a number on which the machine accepts.
We say that an ARAM is $t$-alternating if there are at most $t$ alternations between EXISTS and FORALL instructions. 

\begin{definition}\label{ARAM besch}
    For a parameterisation $\kappa \colon \Sigma^* \rightarrow \mathbb{N}^+$, an ARAM is \emph{$\kappa$-bounded} if there are computable functions $f,g$ and a polynomial $p$ such that for the ARAM on all inputs $x \in \Sigma^*$ the following is true.
    \begin{itemize}
        \item The program needs at most $f(\kappa(x))\cdot p(|x|)$ steps, of which $g(\kappa(x))$ are nondeterministic.
        \item The program uses at most the first $f(\kappa(x))\cdot p(|x|)$ registers.
        \item The program only uses numbers less than or equal to $f(\kappa(x))\cdot p(|x|)$.
    \end{itemize}
\end{definition}
Furthermore, there is a notion of tail-nondeterminism for ARAMs.

\begin{definition}\label{ARAM tail}
    Let $\kappa$ be a parameterisation. 
    A $\kappa$-bounded ARAM program is called \emph{tail-nondeterministic} if there is a computable function $h$ such that the nondeterministic steps of the program are always in the last $h(\kappa(x))$ steps.
\end{definition}

Now, we are ready to state a characterisation of the classes of the $A$-hierarchy in terms of ARAMs.
\begin{theorem}[{\cite[Thm.~8.8]{komp}}]\label{ARAm = in A}
    Let $t \in \mathbb{N}^+$ and $(Q,\kappa)$ be a parameterised problem. 
    Then $(Q,\kappa) \in A[t]$ if and only if there is a tail-nondeterministic $\kappa$-bounded $t$-alternating ARAM that decides $(Q,\kappa)$.
\end{theorem}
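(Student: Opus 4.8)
We prove the two directions separately; the first is routine, the second carries the real work.

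For ``$(Q,\kappa)\in\A{t}$ implies the existence of such an ARAM'', the plan is as follows. By Lemma~\ref{normalisation lemma}, $\A{t}=[\PMC(\simple\Sigma_t^+[2])]^{\mathrm{fpt}}$, so I would first build a tail-nondeterministic, $\iota$-bounded, $t$-alternating ARAM for $\PMC(\simple\Sigma_t^+[2])$ (with $\iota(\langle\calA,\varphi\rangle)=|\varphi|$) and then prepend to it the deterministic FPT-computation of the reduction $x\mapsto\langle\calA_x,\varphi_x\rangle$ witnessing $(Q,\kappa)\leqfpt\PMC(\simple\Sigma_t^+[2])$; since $|\varphi_x|$ is bounded by a function of $\kappa(x)$ and all guesses happen inside the model-checking simulation, the composite machine is again tail-nondeterministic, $\kappa$-bounded and $t$-alternating, and it decides $(Q,\kappa)$. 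The model-checking machine itself runs a deterministic polynomial-time phase that parses $\varphi=\exists\bar x_1\forall\bar x_2\cdots Q_t\bar x_t\,\psi$ and stores, for each binary relation of $\calA$, a table of size $O(|\dom(\calA)|^2)$ in its registers so that atoms can later be tested in constant time; then, in the tail, it processes the blocks in order, loading $|\dom(\calA)|-1$ into the accumulator and performing one EXISTS resp.\ FORALL guess per variable of the block, and after the last guess evaluates $\psi$ on the guessed assignment by a deterministic recursion over the (of size $\le|\varphi|$) syntax of $\psi$. The nondeterministic steps then number at most $|\VAR(\varphi)|\le|\varphi|$, there are at most $t$ alternations, and the guesses all lie in the last $O(|\varphi|)$ steps.

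For the converse, let $\mathbb M$ be a tail-nondeterministic, $\kappa$-bounded, $t$-alternating ARAM deciding $(Q,\kappa)$, with bounds $f,g,h$ and polynomial $p$ as in Definitions~\ref{ARAM besch} and~\ref{ARAM tail}. I would construct an FPT-reduction $(Q,\kappa)\leqfpt\PMC(\Sigma_t)$ as follows. On input $x$, first deterministically simulate $\mathbb M$ on $x$ up to the configuration $C_x$ at which the final $h(\kappa(x))$ steps begin; this takes $f(\kappa(x))\cdot p(|x|)$ steps. Put $N:=f(\kappa(x))\cdot p(|x|)$, so $N$ bounds every register index and content, and note the tail has length $s\le h(\kappa(x))$, uses at most $g(\kappa(x))$ guesses, and has at most $t$ alternations. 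Output $\langle\calA_x,\varphi_x\rangle$ where $\calA_x$ has domain $\{0,1,\dots,N\}$ and provides the order $\le$, the graphs of $+$, of truncated subtraction and of halving restricted to $\{0,\dots,N\}$, and a unary function symbol whose value at $v$ is the content of register $v$ in $C_x$. The formula $\varphi_x$ has a prefix of at most $t$ quantifier blocks, one per alternation block of the tail (existential for EXISTS, universal for FORALL) with one variable per guess, and a quantifier-free matrix stating that running $\mathbb M$'s program — hard-wired into $\varphi_x$, as it is fixed — from $C_x$ for $s$ more steps, with the quantified values inserted as the guessed values (each relativised, in the matrix, by $\le$ to the accumulator content at the relevant point), reaches an accepting configuration.

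The hard part is turning this matrix into a \emph{quantifier-free} formula whose size is bounded by a function of $\kappa(x)$. Here I would fully unwind the at most $s$ remaining steps: each step changes only the accumulator, the program counter and at most one addressed register, so the state after step $j$ is captured by terms recording the current program counter and the writes made so far; a read of register $v$ is then resolved by an $O(s)$-size case distinction that returns the most recent write to address $v$, or else the value given by the register-contents function symbol of $\calA_x$; data-dependent jumps are handled by a case distinction over the constantly many program lines; and every index and value computation is expressed through the $\le$- and arithmetic relations of $\calA_x$. Since $s$ and $g(\kappa(x))$ depend only on $\kappa(x)$, the formula $\varphi_x$ has parameter-bounded size, the map $x\mapsto\langle\calA_x,\varphi_x\rangle$ is FPT-computable, and $x\in Q$ iff $\calA_x\models\varphi_x$, giving $(Q,\kappa)\in\A{t}$. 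Keeping reads and arithmetic at the term level — which is precisely what the function symbol and the arithmetic relations of $\calA_x$ buy us — is what prevents any auxiliary quantifiers, hence any extra alternation, from sneaking in; this encoding is essentially the whole difficulty, the remaining verifications being routine.
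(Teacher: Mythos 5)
Note first that the paper does not itself supply a proof of this theorem; it is imported as a citation, so there is no in-paper argument to compare against. Judged on its own terms, your overall decomposition is the standard one and the forward direction is sound: normalise via Lemma~\ref{normalisation lemma}, build a $\kappa$-bounded ARAM for $\PMC(\simple\Sigma_t^+[2])$ that precomputes lookup tables (for the relations, and for multiples of $|\dom(\calA)|$ so the address arithmetic in the tail is $O(1)$ per atom) and then guesses and checks in a tail of $O(|\varphi|)$ steps, and finally compose with the fpt-reduction.

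The converse has a genuine gap at exactly the point you yourself flag as ``the whole difficulty''. You claim the unwound tail can be written as a quantifier-free matrix by ``keeping reads and arithmetic at the term level'', but the signature you assign to $\calA_x$ rules this out: the initial register contents are a unary \emph{function} symbol, but $+$, truncated subtraction and halving are supplied only as \emph{relations} (graphs). Over that vocabulary the only terms are nestings of the register-contents function applied to variables; there is no term denoting the sum of two values, and certainly none implementing the $O(s)$-way case distinction needed to resolve an indirect read or a data-dependent jump, since first-order logic has no if-then-else term former. So the matrix you describe is not in fact expressible quantifier-free, contradicting your closing sentence that term-level encoding ``prevents any auxiliary quantifiers''. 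The standard repair is to introduce, for each of the at most $h(\kappa(x))$ tail steps, fresh variables for the program counter, the accumulator, and the one address--value pair written at that step; quantify \emph{all} of them in the innermost block (existentially if $t$ is odd, universally with an implication guard if $t$ is even --- either works, since the trace is uniquely determined once the outer guessed variables are fixed), and express the step semantics, the guess bounds, and acceptance in the matrix via the relations of $\calA_x$. These auxiliaries are homogeneous with the innermost block, so no new alternation appears, and since their number depends only on $\kappa(x)$, the size of $\varphi_x$ stays parameter-bounded. (Alternatively, give $\calA_x$ genuine function symbols for the arithmetic operations together with an if-then-else function; then your term-level plan does go through --- but that is not the signature you wrote down.) With either fix the argument is complete; as written, the crucial step is asserted rather than established.
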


\section{Generalised SUBSET-SUM and the A-Hierarchy}
In the following, we will consider a variant of the well-known SUBSET-SUM problem. 
Classically, this problem is known to be NP-complete~\cite{DBLP:books/fm/GareyJ79}. 
As a first step, we will give a definition of a variant of that classical problem that is complete for the third level of the A-hierarchy. 
Afterwards, we will generalise this problem to show completeness for every level of the A-hierarchy.
The problem $\ALTSSS$ is defined as follows:
\[
\left\{ \langle A_1,k,A_2,l,A_3,m,t \rangle\;\middle |\; \parbox{7cm}{$A_1, A_2, A_3\subseteq\mathbb N_0$, $k,l,m,t \in \mathbb{N}_0$, $\exists A'_1  \subseteq A_1$ with $|A_1'| = k$, $\forall A_2' \subseteq A_2$ with $|A_2'| = l$,  $\exists A_3' \subseteq A_3$ with $|A_3'| = m$, such that $\sum_{a \in A_1'} a + \sum_{a \in A_2'} a + \sum_{a \in A_3'} a = t  $ } \right\},
\]
\begin{example}
    Consider the following instance of $\ALTSSS$: \[\langle\{0,3\},1,\{1,2\},1,\{2,3\},1,7\rangle\]
    If this instance is in $\ALTSSS$, than there must exist a subset of $\{0,3\}$ with magnitude 1 such that for all subsets with of $\{1,2\}$ with magnitude 1 there exists a subset of $\{2,3\}$ such that all the chosen subsets sum to 7. If $\{0\}$ is chosen out of $\{0,3\}$, than every possible sum is smaller than 7. That concludes, that if this is a \emph{yes}-Instance, $\{3\}$ must be chosen. Now every subset with magnitude 1 of $\{1,2\}$ must be investigated. If $\{1\}$ is chosen, out of the last set $\{3\}$ can be chosen. The sum is in this case $3+1+3 = 7$. The only other subset with correct size is $\{2\}$. In this case $\{2\}$ can be chosen out of the last set. Since $3+2+2=7$, the chosen sets sum up correctly. We can conclude: 
    \[\langle\{0,3\},1,\{1,2\},1,\{2,3\},1,7\rangle \in \ALTSSS\]
\end{example}

The parameterised version of this problem then is $\pALTSSS$ and is defined as 
\[
    (\ALTSSS, \langle A_1,k,A_2,l,A_3,m,t \rangle \mapsto k+l+m). \]

We will show that $\pALTSSS$ is complete for the third level of the A-hierarchy.
\begin{theorem}\label{sss vst}
    $\pALTSSS$ is $\A3$-complete.
\end{theorem}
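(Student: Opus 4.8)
The statement has two halves, membership in $\A3$ and $\A3$-hardness, and I would treat them separately.

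For membership I would use the ARAM characterisation of Theorem~\ref{ARAm = in A}. On input $\langle A_1,k,A_2,l,A_3,m,t\rangle$ with $\kappa=k+l+m$, the machine first parses the input deterministically (the ``bulk'' of the computation, with no nondeterminism), and then in a final phase performs three guessing blocks in the pattern EXISTS--FORALL--EXISTS matching $\Sigma_3$: an EXISTS-block guessing $k$ strictly increasing indices into $A_1$ (so that automatically $|A_1'|=k$), a FORALL-block guessing $l$ strictly increasing indices into $A_2$, and an EXISTS-block guessing $m$ strictly increasing indices into $A_3$; finally it checks whether the selected elements sum to $t$. Only $O(\kappa)$ steps are nondeterministic and they all lie in the tail, and the step, register and number bounds are $\kappa$-bounded, so this witnesses $\pALTSSS\in\A3$.

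For hardness I would reduce from $\PMC(\simple\Sigma_3^+[2])$, which is $\A3$-complete: membership is immediate, and $\A3$-hardness is the definition $\A3=[\PMC(\Sigma_3)]^{\mathrm{fpt}}$ together with the binary-relation strengthening of Lemma~\ref{normalisation lemma}. So fix a structure $\mathcal{A}$ with domain $\{1,\dots,n\}$ and binary relations $R_1,\dots$, and a formula $\varphi=\exists x_1\cdots x_a\,\forall y_1\cdots y_b\,\exists z_1\cdots z_c\,\bigwedge_{p=1}^{q}R_{i_p}(s_p,s_p')$. The plan is to build an $\ALTSSS$ instance over a positional number system with a base $B$ so large that no carries ever occur among the numbers introduced, using disjoint blocks of digits of two kinds: a \emph{selection block} for each quantified variable, whose job is to pin down one domain value for that variable, and a \emph{verification block} for each atom $R_{i_p}(s_p,s_p')$, whose job is to certify that the pair of values chosen for $s_p$ and $s_p'$ is a tuple of $R_{i_p}^{\mathcal{A}}$. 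The first existential subset $A_1'$ encodes the assignment to $\bar x$, the universal subset $A_2'$ encodes the assignment to $\bar y$, and the last existential subset $A_3'$ encodes the assignment to $\bar z$ together with, for each atom, one ``witness'' number drawn from the tuples of the relevant relation. The target $t$ is chosen so that $\sum A_1'+\sum A_2'+\sum A_3'=t$ holds exactly when every selection block carries exactly one value and, for every atom, the chosen witness agrees with the two selected argument values; satisfaction of $\varphi$ then corresponds to existence of a consistent such choice. For the two existential subsets this is routine — essentially the classical reduction from $\SAT$ to SUBSET-SUM adapted to selecting domain values — because the existential player respects the target, so ``exactly one value per variable'' is forced for $\bar x$ and $\bar z$, and $k$ and $m$ can be kept of order $|\varphi|$.

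The hard part, I expect, is the universal subset. Unlike the existential players, the FORALL-player is not steered by the target: it may pick \emph{any} $l$-element subset of $A_2$, so the reduction is correct only if $A_2$ and $l$ are designed so that (i) every $l$-element subset of $A_2$ decodes to a genuine assignment $\bar y\in\{1,\dots,n\}^{b}$, or at least to one the existential player can handle no worse than a genuine one (e.g.\ by arranging that a ``malformed'' coordinate decodes to a value for which the atoms mentioning that variable become vacuously satisfiable), (ii) every genuine $\bar y$ is realised by some $l$-element subset, and (iii) $l$ is bounded by a function of $|\varphi|$ alone, independently of $n$, so that $k+l+m\le h(|\varphi|)$ as an FPT-reduction requires. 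I would design a dedicated selection gadget achieving (i)--(iii), and then establish correctness by the two usual implications: if $\mathcal{A}\models\varphi$, then the existential player, playing the encoding of a satisfying $\bar x$, answers every $A_2'$ — for a genuine $\bar y$ by a satisfying $\bar z$ with matching witnesses, for a malformed $A_2'$ by exploiting the vacuity in (i); and if $\mathcal{A}\not\models\varphi$, then for every $A_1'$ the FORALL-player plays the canonical encoding of a $\bar y$ witnessing $\mathcal{A}\not\models\varphi$, against which no $A_3'$ can reach $t$ since some atom's witness is forced to disagree with a selected value. Finally I would check the three FPT-reduction conditions: polynomial-time computability in $|\mathcal{A}|+|\varphi|$, preservation of yes/no-instances by the above, and $k+l+m\le h(|\varphi|)$ by (iii). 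The main obstacle is precisely this parameter-bounded, faithful encoding of the universal quantifier; the remaining ingredients are standard, and the same scheme should extend from three to $\ell$ quantifier blocks.
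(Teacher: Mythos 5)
Your membership argument is essentially the paper's: an ARAM that parses the input deterministically and then in a tail phase guesses index sets of sizes $k,l,m$ in an EXISTS--FORALL--EXISTS pattern and checks the sum, invoking Theorem~\ref{ARAm = in A}. That half is fine.

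For hardness you identify exactly the right obstacle --- the universal player is not steered by the target $t$, so every $l$-element subset of $A_2$ must be ``answerable'' by the last existential player, and $l$ must be bounded in $|\varphi|$ alone --- and you set up the same overall frame as the paper (positional base, selection blocks per variable, verification blocks per atom, a target forcing one value per variable and witnesses for each atom). But the crux is then deferred: ``I would design a dedicated selection gadget achieving (i)--(iii)'' is precisely the content of the proof, and the one concrete suggestion you give for (i) --- arrange that a malformed coordinate \emph{decodes} to a domain value for which the atoms mentioning that variable are vacuously satisfiable --- does not go through in the SUBSET-SUM arithmetic. If $A_2'$ selects two numbers for the same variable $x_j$, the digit at position $j$ of block $B_j$ becomes $I(a)+I(a')$, which is not the index of any element and in general collides with other indices; if $A_2'$ omits $x_j$, the digit is $0$ and there is no domain element ``$0$''. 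There is no clean decoding of a malformed choice into a genuine assignment, so one cannot push the problem back into the semantics of the atoms. The paper resolves this on the arithmetic side instead: it puts additional $\mathrm{FIX}$-numbers into the \emph{last} existential set $A_3$ so that whenever $A_2'$ is malformed the existential player can overwrite the affected blocks and still hit $t$ without routing through any assignment, and it adds $\mathrm{WAIT}$/$\mathrm{NOWAIT}$ padding to $A_3$ (with a corresponding extension $t'$ of the target) so that the two branches --- genuine assignment vs.\ malformed $A_2'$ repaired by $\mathrm{FIX}$ --- use the same cardinality $m'$, which is needed because the $\exists A_3'$ quantifier fixes $|A_3'|$ in advance. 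Your write-up also omits the $\mathrm{NORM}$-numbers that normalise each block to the same value $(l{+}1)\cdot|A|$ regardless of which domain element was selected, without which a single target $t$ cannot be stated. In short: the plan is right, the parameter accounting is right, and the existential parts are indeed the routine SAT-to-SUBSET-SUM adaptation; but the universal-quantifier gadget is missing and the hinted construction would not work as stated.
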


We split the proof of the result into the following two lemmas.

\begin{lemma}\label{SSS3-in}
    $\pALTSSS$ is in $\A3$.
\end{lemma}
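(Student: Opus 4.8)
The plan is to apply Theorem~\ref{ARAm = in A}: it suffices to build a tail-nondeterministic, $\kappa$-bounded, $3$-alternating ARAM that decides $\pALTSSS$, where $\kappa(\langle A_1,k,A_2,l,A_3,m,t\rangle)=k+l+m$. The machine runs in two phases. In a deterministic preprocessing phase it parses the input, stores $k,l,m,t$ and $|A_1|,|A_2|,|A_3|$ in fixed registers, and lays out the three sets so that the $i$-th element of each can be fetched by a constant-length routine; this takes time polynomial in the input length and uses no nondeterminism. The guessing phase is then arranged to comprise only the last $O((k+l+m)^2)$ steps of the computation and to contain every nondeterministic instruction: setting the accumulator to one less than the size of the relevant set before each guess, the machine performs $k$ EXISTS-guesses of indices into $A_1$, then $l$ FORALL-guesses of indices into $A_2$, then $m$ EXISTS-guesses of indices into $A_3$, every guessed value thus being an index bounded by the input length.

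After the guesses the machine deterministically checks that, inside each of the three blocks, the guessed indices are pairwise distinct. A failure causes the current branch to be rejected in an EXISTS-block and accepted in the FORALL-block, so that the branches range precisely over the subsets $A_1'\subseteq A_1$, $A_2'\subseteq A_2$, $A_3'\subseteq A_3$ of cardinalities $k,l,m$ (and degenerate cases such as $l>|A_2|$ fall out correctly, matching the vacuous truth of a $\forall$ over no subsets). Finally the machine fetches the selected elements, forms $\sum_{a\in A_1'}a+\sum_{a\in A_2'}a+\sum_{a\in A_3'}a$, and accepts iff this equals $t$. The quantifier pattern $\exists\forall\exists$ together with the acceptance conditions for ARAMs then makes the machine accept exactly on the yes-instances of $\ALTSSS$; it is $3$-alternating, makes at most $k+l+m$ nondeterministic steps, and all of them lie in the final parameter-bounded block, so Theorem~\ref{ARAm = in A} yields $\pALTSSS\in\A3$.

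The delicate point, and the one I expect to be the main obstacle, is the concluding summation: it must run within a number of steps bounded purely in $k+l+m$ (to keep the guesses in the tail), while every register content stays bounded by $f(\kappa)\cdot p(n)$ (for $\kappa$-boundedness). This constrains the data layout fixed in the preprocessing phase — each of the $k+l+m$ selected numbers has to be added to a running total in a constant number of steps, and one must verify that these running totals never leave the admissible range — so it is precisely the way the instance's integers are encoded as register contents that has to be pinned down here. The remaining ingredients (parsing, the distinctness tests, the correspondence between branches and subsets, and the bookkeeping of alternations) are routine.
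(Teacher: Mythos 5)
Your ARAM construction is essentially the paper's: deterministically preprocess, then in the tail make $k$ EXISTS-, $l$ FORALL- and $m$ EXISTS-guesses of indices into $A_1,A_2,A_3$, sum the selected elements, compare with $t$, and invoke Theorem~\ref{ARAm = in A}; your explicit duplicate-index check (reject on a clash in an existential block, accept on a clash in the universal block) is a small refinement that the paper leaves implicit. The register-content concern you flag is not a genuine obstacle, because in the RAM model of Flum and Grohe the size of a numerical input already reflects the magnitudes of its entries, so the stored set elements and the partial sums, which involve at most $k+l+m$ of them, remain within the $f(\kappa)\cdot p(|x|)$ budget; this is precisely what lets the paper add the selected numbers in $O(k+l+m)$ constant-time steps.
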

\begin{proof}
We give an ARAM program deciding membership. 
It uses $k+l+m$ registers for the numbers in the sets $A_1,A_2,A_3$. 
After these registers, the following ones store the values of $k,l,m,|A_1|,|A_2|,|A_3|$. 
Then, we need a register \texttt{sum} for the respective sum of the subsets. 
After that register, we use separate registers for the natural numbers in the three sets.
The nondeterministic instructions FORALL/EXISTS allow us to guess the required subsets by guessing indices after loading the ``offset'' $|A_i|$ from the corresponding register into $R_0$. 
These numbers are added up one by one into the \texttt{sum} register and compared to~$t$.

Regarding the runtime of the ARAM program, after preparing the registers in $O(n)$ steps, guessing of the subsets in $k+l+m$ steps as well as adding up (again $k+l+m$ steps) and comparing with $t$ can be overall done in $O(k+l+m)$ steps.
Clearly, the program is $3$-alternating and $(k+l+m)$-bounded. 
Hence, by Theorem~\ref{ARAm = in A}, we have that $\pALTSSS \in \A3$.\hfill$\Box$
\end{proof}

The idea of the following lemma is to thoroughly construct numbers over a particular basis such that satisfaction of the model-checking instance formula corresponds to summing up numbers reaching a target sum. 
In the course of defining these numbers, we have to be careful in two ways. 
First, we need to ensure that no overflow can occur. 
Second, we need to guarantee that, depending on the quantifiers in the formula, there are always choices for the ``universal'' set.
It is a good idea to consult Fig.~\ref{fig-numbers} first to get a vague idea of the numbers used.

\begin{lemma}\label{SSS3-hard}
    $\pALTSSS$ is $\A3$-hard.
\end{lemma}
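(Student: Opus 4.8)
The plan is to give an FPT-reduction from $\PMC(\simple\Sigma_3^+[2])$ to $\pALTSSS$; by Lemma~\ref{normalisation lemma} (in its strengthened, binary-relation form) this establishes the claim. Fix an instance $\langle\calA,\varphi\rangle$ with $\varphi=\exists\bar u\,\forall\bar v\,\exists\bar w\,\psi$, where $\psi=\bigwedge_{j=1}^{p}R_j(s_j,t_j)$ is a conjunction of positive binary atoms, $\bar u=(u_1,\dots,u_{k_0})$, $\bar v=(v_1,\dots,v_{l_0})$, $\bar w=(w_1,\dots,w_{m_0})$, and $|\varphi|$ bounds $k_0+l_0+m_0+p$; let $n=|\dom(\calA)|$. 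The idea is to encode a choice of one domain element per variable in the digits of numbers written in a suitably large base $B$ (see Fig.~\ref{fig-numbers}): one digit block $\mathrm{sel}(x)$ per variable $x$ forces exactly one element to be selected for $x$; a pair of blocks $\mathrm{chk}(j,1),\mathrm{chk}(j,2)$ per atom is used to verify that atom by means of a dedicated \emph{witness} number that must ``announce'' the values of the atom's two arguments; plus some bookkeeping blocks. The set $A_1$ carries the numbers encoding a choice for $\bar u$, the set $A_2$ those for $\bar v$, and $A_3$ those for $\bar w$ \emph{together with all atom-witness numbers} — the latter must come after the universal choice, as a witness may depend on the element chosen for some $v_i$. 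The magnitudes will be $k=k_0$, $l=l_0$ and $m=m_0+p+O(l_0^2)$, so $k+l+m=O(|\varphi|^2)$ is bounded by a computable function of the parameter, and all numbers stay polynomially bounded in $n$ and $|\varphi|$; hence the map is even a polynomial-time FPT-reduction.

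The delicate point — which I expect to be the main obstacle — is that a size-$l_0$ subset of $A_2$ need not put exactly one number into each slot $\mathrm{sel}(v_i)$, and such a ``malformed'' universal selection must never be able to refute a yes-instance. I would tame this with two devices. First, extend $\calA$ by a \emph{wildcard} element $\star$ that lies in every relation in every coordinate, but provide \emph{no} $\star$-number in $A_1$ or $A_3$; a short argument shows that $\calA\models\varphi$ iff the extended structure satisfies $\varphi$ when $\bar v$ is allowed to range over $\dom(\calA)\cup\{\star\}$ (replacing a $\star$ back by any genuine element only strengthens the matrix). Second, shift element values into a window $[N,N+n-1]$ with $N>n$, so that a sum of two or more distinct element values always exceeds every single element value. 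Then a universal selection that over-fills a slot $v_i$, leaves it empty, or uses its $\star$-number produces, at every $\mathrm{chk}$ block fed by $v_i$, a \emph{weird} value lying outside the element window; and we put into $A_3$ an \emph{escape witness} for every weird value (combination) occurring at a $\mathrm{chk}$ block. An escape witness certifies the atom for free, but — since the element window and the weird range are disjoint — it is usable \emph{only} on weird contributions, so the $A_3$-player can never use one to skip a genuine check; conversely an empty or over-filled slot behaves exactly like $v_i=\star$. Simple filler numbers in $A_3$ let the $A_3$-player raise each $\mathrm{sel}(v_i)$ block to a fixed target whatever the universal player placed there. Finally, one picks $B$ above the largest digit sum that can ever occur, so no carry between blocks is possible; with the bounds above this is a polynomial choice.

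Correctness then splits in the usual way. If $\calA\models\varphi$, let $\bar u^{\ast}$ be a witness and have the $A_1$-player choose the corresponding numbers; given any size-$l_0$ universal selection, read off $\bar v^{\ast}\in(\dom(\calA)\cup\{\star\})^{l_0}$ by setting $v_i^{\ast}=\star$ whenever slot $i$ is empty, multiply filled, or filled by the $\star$-number. By the wildcard property there is $\bar w^{\ast}$ with $\psi(\bar u^{\ast},\bar v^{\ast},\bar w^{\ast})$ true in the extended structure, and the $A_3$-player picks the $\bar w^{\ast}$-numbers, a normal witness for every atom both of whose arguments are genuine elements, an escape witness for every atom touching a $\star$-slot, and the needed fillers; every block meets the target and no carry occurs, so the instance is positive. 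Conversely, a positive instance forces — via the $\mathrm{sel}$ and witness-selector blocks — exactly one element per $u_i$, exactly one per $w_j$ and exactly one witness per atom, and the $\mathrm{chk}$ blocks then force each chosen witness to announce the true values of its arguments. Hence a normal witness is used exactly when both arguments are genuine elements, in which case it certifies $R_j$ on those elements, while an escape witness is used exactly when some argument is a $\star$-slot, which is consistent with $\star$ lying in $R_j$. The recovered $\bar u^{\ast}$ therefore witnesses satisfaction of $\varphi$ in the extended structure for every $\bar v^{\ast}$, i.e.\ $\calA\models\varphi$. I expect most of the write-up to be the explicit table of numbers (Fig.~\ref{fig-numbers}) and the carry-free bookkeeping, the conceptual heart being the wildcard-plus-escape-witness design that renders malformed universal selections harmless without ever handing the $A_3$-player a shortcut.
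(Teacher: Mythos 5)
Your proposal is correct, and it reaches the result by a genuinely different device than the paper even though the high-level skeleton (base-$D$ digit blocks; one selection block per variable; check blocks tied to atoms; carry-free arithmetic via a large enough base; $A_1,A_2$ carry variable selections for the first two quantifier blocks and $A_3$ carries both the last block's selections and all certification numbers) is the same. The crux both you and the paper must overcome is that a size-$l$ subset of $A_2$ need not encode a genuine assignment, yet the inner $\exists$-player must then still be able to hit $t$. The paper solves this with ad-hoc $\mathrm{FIX}(x_j)$ numbers that directly write all ones into the $L$-block (a ``free pass'' on the atom checks), together with normalising $\mathrm{FIX}(x_j,d)$ numbers; since the number of $A_3$-choices differs between the well-formed and the malformed case, it then needs the extra $\mathrm{WAIT}/\mathrm{NOWAIT}$ gadget and an enlarged target $t'$ to equalise cardinalities. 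You instead extend $\calA$ by a wildcard $\star$ lying in every relation, shift element codes into a window disjoint from anything a malformed selection can produce, and supply escape witnesses keyed to the resulting ``weird'' digit values; a malformed universal slot is then semantically read off as $\star$ and handled inside the logic rather than by overriding digits. A pleasant side effect is that with one filler per $v$-slot and exactly one (normal or escape) witness per atom the $A_3$-cardinality is the same in every case, so the $\mathrm{WAIT}$ machinery disappears. The trade-off is a larger but still polynomially bounded inventory of escape witnesses (one per atom per achievable weird-value combination), and you correctly restrict them to check-blocks fed by $v$-variables, so a malformed $A_1$ or $A_3$ selection still fails as it must. Both reductions are valid; yours is closer to the semantics of $\varphi$, the paper's is closer to pure digit bookkeeping.
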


\begin{proof}
We will show a reduction from $\PMC(\simple\Sigma_3[2])$.
For that let $\langle \mathcal{A},\varphi\rangle$ be an instance of $\PMC(\simple\Sigma_3[2])$.
Hence, $\mathcal{A} = (A, \tau)$, where $\tau$ only contains binary relations. 
Furthermore, for $x_{i,j} \in \{x_1,\dots, x_{k+l+m}\}$, we have that
\[
    \varphi = \exists x_1 \dots \exists x_k \forall x_{k+1} \dots \forall x_{k+l} \exists x_{k+l+1} \dots \exists x_{k+l+m} \bigwedge_{i=1}^n \lambda_i(x_{i,1},x_{i,2}),
\]
with $\lambda_i$ is an atom $\lambda_i = R(x_{i,1},x_{i,2})$ for some relation $R \in \tau$.
Without loss of generality, assume some fixed bijection $I\colon A \rightarrow [1,|A|]$ which refers to the index of an element in $A$.

Note that all numbers defined in the following are in some base $D$. 
This base will be later chosen large enough such that no overflows can occur.
Almost all numbers will be of the form $L_1\dots L_n \; B_1\dots B_{k+l+m}$ and the rightmost bit is the least significant bit.  
You can see a summary of the numbers used in the proof in Figure~\ref{fig-numbers}.
They will start with $n$ digits, abbreviated by $L_1,\dots,L_n$ (the only exception will be particularly defined numbers at the end, they will start with a larger block of other digits). 
These digits will yield information on the $\lambda_i$'s. 
Succeeding these digits, there will be $k+l+m$ blocks $B_j$ of each $k+l+m+1$ digits. 
These blocks will contain information on the variables $x_j$. 
Alltogether, these numbers have $n+(k+l+m+1)\cdot(k+l+m)$ digits.

\begin{figure}[t]
    \resizebox{\textwidth}{!}{\(\displaystyle%
    \begin{array}{r@{}c*{7}{@{\,}c}*{2}{@{\;}c}}
     & L_1&\cdots&L_i&\cdots&L_n&B_1&\cdots B_{j'}\cdots& B_j&\cdots&B_{k+l+m}\\
     &&&&&&&B^j_{j'}&\overbrace{B_j^1\cdots B_j^j\cdots B_j^{j'}\cdots B_j^{k+l+m+1}}&&\\
\text{VAR}(a,x_j)=&
    0&&\cdots&&0&
    0&\cdots&
    0\cdots I(a)\cdots0\cdots0 &\cdots&0\\
\text{ATOM}(a,b,x_j,x_{j'})= &0&\cdots&1&\cdots& 0& 0 &I(b)&0\cdots0\cdots I(a)\cdots0&\cdots&0\\
\text{NORM}(a,x_j)= & 0&&\cdots&&0&
0&\cdots0&
    (l+1)\cdot|A|\cdots (l+1)\cdot|A|1 &0\cdots&0\\
\text{FIX}(x_j)=&1&&\cdots&&1&
0&\cdots0&
    (l+1)\cdot|A|\cdots (l+1)\cdot|A|1 &0\cdots&0\\
\text{FIX}(x_j,d)=&
    0&&\cdots&&0&
    0&\cdots0&
    (l+1)\cdot|A|\cdots (l+1)\cdot|A|-d\cdots1&0\cdots&0 \\\bottomrule
\end{array}\)}

\resizebox{\textwidth}{!}{\(\displaystyle
\begin{array}{r@{\;}*{7}{c}}
    &1& \cdots & r &\cdots & s & L_1\cdots L_n & B_1\cdots B_{k+l+m}\\
    \text{WAIT}(r)=& 0&\cdots&1&\cdots&0&0\cdots0&0\cdots0\\
    \text{NOWAIT}=& 1&\cdots &1&\cdots&1&0\cdots0&0\cdots0\\
    t'=&1&\dots&1&\cdots&1&1\cdots1&(l+1) \cdot |A| \dots (l+1)|A|1\cdots (l+1) \cdot |A| \dots (l+1)|A|1
\end{array}
\)}

\caption{Overview of used numbers in the proof of Theorem~\ref{sss vst}. Here, $a,b\in A, 1\leq i\leq n$, $(b,c)\models\lambda_i(x_j,x_{j'})$, $d\in[0,l\cdot|A|]$, $1\leq j'< j\leq k+l+m$, $1\leq r\leq s$. Just for presentation reasons, we assumed $j'<j$. As part of the proof, we consider the possibility that $j<j'$ is also possible.}\label{fig-numbers}
\end{figure}

We will now define the sets $A_1,A_2,A_3$. 
We need to encode that a variable is assigned a value from the universe. 
For that purpose, we define for all $a \in A$ and $x_j \in \{x_1,\dots,x_{k+l+m}\}$ the number $\mathrm{VAR}(a,x_j) \coloneqq B_1 B_2 \dots B_{k+l+m}$.
Here, each block $B_j$ has the value $I(a)$ at the $j$-th position, all other positions are $0$.
These numbers are then distributed as follows:
 \begin{itemize}
     \item $j \in [1,k] \Rightarrow \mathrm{VAR}(a, x_j) \in \mathrm{VAR}_1 $
     \item $j \in [k+1,k+l] \Rightarrow \mathrm{VAR}(a, x_j) \in \mathrm{VAR}_2$
     \item $j\in [k+l+1,k+l+m] \Rightarrow \mathrm{VAR}(a, x_j) \in  \mathrm{VAR}_3$
 \end{itemize}
We need to ensure that $k,l,m$ elements from the respective VAR-block are chosen.
The next claim directly follows by the definition of the VAR-numbers.
\begin{myclaim}\label{lemma var}
    Let $\mathcal{J}$ be an assignment of $x_1, \dots, x_{k+l+m}$ and $a_j \coloneqq \mathcal{J}(x_j)$. Then we have that

    \[
        \sum_{j=1}^{k+l+m} \mathrm{VAR}(a_j, x_j) = B_1 B_2 \dots B_{k+l+m},
    \]
    where in $B_j$ at position $j$ the value is $I(a_j)$ and otherwise there are only zeros.
\end{myclaim}
Next, we will encode the atomic formulas.
For all $a,b \in A$ and $x_j,x_{j'} \in \{x_1,\dots,$ $x_{k+l+m}\}$ with $j < j'$, we define the number 
\[\mathrm{ATOM}(a, b, x_j, x_{j'}) \coloneqq L_1 \dots L_n B_1 \dots B_{k+l+m}.\]
For $i \in [1,n]$, $L_i$ is $1$ if and only if $(a,b) \models \lambda_i$ and $x_j,x_{j'}$ are the variables in $\lambda_i$; otherwise $L_i=0$ for all $i$.
All $B$-blocks are zero except for the $j$-th and $j'$-th.
Block $j$ has the value $I(a)$ at position $j'$, block $j'$ has the value $I(b)$ at position $j$ and otherwise zeros. 
All these numbers are collected in the set ATOM.
The following claim is about the sum corresponding to an assignment that fits to the ATOM- and VAR-numbers.
\begin{myclaim}\label{lemma atom}
Let $\mathcal{J}$ be an assignment of $x_1\dots x_{k+l+m}$. 
Then $\mathcal{J}$ satisfies the quantifier-free part of $\varphi$ if and only if
\[
\sum_{j = 1}^{k+l+m} \left(\mathrm{VAR}(a_j, x_j) + \sum_{j' = j+1}^{k+l+m} \mathrm{ATOM} (a_j, a_{j'}, x_j, x_{j'}) \right) = \underbrace{1\dots 1}_n\; B_1\dots B_{k+l+m},\]
with $B_j$ for all $j \in [1,k+l+m]$:
\[
B_j \coloneqq \overbrace{I(\mathcal{J}(x_j))\dots I(\mathcal{J}(x_j))}^{k+l+m}0.
\]
\end{myclaim}
\begin{proof}[of Claim~\ref{lemma atom}]
    The number is composed as follows. 
    At the beginning, it has exactly $n$ ones if $\mathcal{J}$ satisfies the quantifier-free part of $\varphi$. 
    The reason for this is that the only summand that can add a $1$ at position $L_i$ is ATOM$(a_j, a_{j'}, x_j, x_{j'})$ if $x_j$ and $x_{j'}$ are the variables in $\lambda_i$.
    There is a $1$ added if and only if $(a_j,a_{j'}) \models \lambda_i$.     
    For $x_j$, at each position $j'\neq j$ of the $B_j$ block, the value $I(\mathcal{J}(x_j))$ was added by adding the values of ATOM$(a_j, a_{j'}, x_j, x_{j'})$. 
    According to Claim~\ref{lemma var}, the value $I(\mathcal{J}(x_j))$ is already added by the number VAR$(a_j, x_j)$. 
    The $(k+l+m+1)$-th position of each block is not increased by any number, so it remains at $0$.\hfill$\blacksquare$
\end{proof}

As it is not clear in advance which variables are assigned to which elements of $A$, we need to ensure that the sum $t$ can always be reached. 
For that purpose, we introduce the numbers $\mathrm{NORM}(a, x_j) \coloneqq B_1 \dots B_{k+l+m}$. 
For all $j \in [1,k+l+m]$, all $B$-blocks except $B_j$ consist of zeros.
The block $B_j$ has the value $(l+1) \cdot |A| - I(a)$ except for the last position, where a $1$ is placed. 
The set of these numbers is called NORM.
Now, we define the target number $t$ as
\[
t \coloneqq \underbrace{1 \dots 1}_n\;B_1 \dots B_{k+l+m}.\]
For all $j \in [1,k+l+m]$ we have that
\[
B_j \coloneqq \underbrace{(l+1) \cdot |A| \dots\; (l+1) \cdot |A|}_{k+l+m} 1.
\]
The first $n$ digits ensure that every atom  satisfied. 
The remaining $(k+l+m) \cdot (k+l+m+1)$ digits are blocks with the value $(l+1) \cdot |A|$ followed by a $1$. 
The $1$ ensures that no more than one NORM-number is chosen for the same variable.
The next claim states that the sum $t$ can be reached by choosing numbers from VAR, ATOM, and NORM. 
It follows by construction of the numbers.
\begin{myclaim}\label{lemma norm}
    Let $\mathcal{J}$ be an assignment of $x_1,\dots, x_{k+l+m}$. 
    Then $\mathcal{J}$ satisfies the quanti\-fier-free part of $\varphi$ if and only if 
    \[
        \sum_{j = 1}^{k+l+m} \left(\mathrm{VAR} (a_j, x_j) + \mathrm{NORM}(a_j, x_j) + \sum_{j' = j+1}^{k+l+m} \mathrm{ATOM}(a_j, a_{j'}, x_j, x_{j'}) \right) = t.
    \]
\end{myclaim}

The following claim states that it is impossible to reach the sum $t$ by choosing subsets that are not according to Claim~\ref{lemma norm}.

\begin{myclaim}\label{lemma kaputt}
    Let $A'_1, A'_2, A_3'$ be some chosen subsets. 
    Exists a number $x_j$ for which neither VAR$(a, x_j)$ nor NORM$(a, x_j)$ has been chosen then it is impossible to reach the sum $t$, if only numbers from VAR$_1$, VAR$_2$, VAR$_3$, ATOM or NORM have been chosen.
\end{myclaim}
\begin{proof}[of Claim~\ref{lemma kaputt}]
    Consider the overall sum.
    If neither the number VAR$(a, x_j)$ nor NORM$(a, x_j)$ has been chosen, then a $0$ would be at position $j$ of the $B_j$-block. 
    This would in turn require two NORM-numbers to be added to reach $(l+1) \cdot |A|$. 
    This is because the largest number that can be added by a NORM-number is $(l+1) \cdot |A| - 1$. 
    This would produce a $2$ at the last position of the $B_j$-block.
    If no NORM$(a, x_j)$ has been chosen, at the last position of the $B_j$-block would be a $0$.
    Using only numbers from ATOM cannot solve this problem either, as they have at position $j$ a $0$. \hfill$\blacksquare$
\end{proof}

As $A_1$ and $A_3$ are existentially quantified, it is correct that subsets that do not correspond to a valid assignment cannot yield to the sum that is $t$. 
However, for $A_2$, all possible subsets must be able to reach the sum $t$ for a valid assignment.
If a variable is assigned a value twice in $A_2$ (which is an invalid assignment but a correct subset), we add numbers to $A_3$ that allow to still reach the sum $t$. 
If one or more variables are not assigned a value in $A_2$, there must be a variable, that has not been assigned, as from VAR$_2$ exactly $l$ numbers have to be chosen. 
We add numbers that set $L_1\dots L_n$ to all ones in that case. 
For all $j \in [k+1,k+l]$, we define 
\[
\mathrm{FIX}(x_j) \coloneqq \underbrace{1 \dots 1}_n\; B_1 \dots B_{k+l+m}.
\] 
All $B$-blocks are zero except for block $B_j$. 
This block has the value $(l+1) \cdot |A|$ at every position except a 1 at the last.

Now we need a possibility to set every digit in every $B$-block to $(l+1) \cdot |A|$. 
For that purpose, for all $d \in [0,l \cdot |A|]$ and $j \in [1,k+l+m]$ we add the numbers
\[
\mathrm{FIX}(x_j, d) \coloneqq B_1 \dots B_{k+l+m},
\] 
where all $B$-blocks consist of zeros except for block $B_j$, which has the value $(l+1) \cdot |A|$ at every position except a 1 at the last and at $j$ the value $(l+1) \cdot |A| - d$.
The set of these numbers is called FIX. 

\begin{myclaim}\label{lemma fix}
    Let $A_2' \subseteq $VAR$_2$ and assume that for an $x_{j}$ no VAR$(a, x_j)$ is in $A_2'$. 
    For every $j' \neq j$ there exists a $d_{j'}$, such that
    \begin{align*}
    &\sum_{j'=1}^k \mathrm{VAR}(a_{j'}, x_{j'}) + \sum_{a \in A_2'} a + \mathrm{FIX}(x_{j}) + \sum_{j'=1, j \neq j'}^{k+l+m} \mathrm{FIX}(x_{j'}, d_{j'}) = t.
    \end{align*}
\end{myclaim}
\begin{proof}[of Claim~\ref{lemma fix}]
    In this sum, every $L_i=1$ because of FIX$(x_{j'})$.
    Consider the $B_{j'}$-block for $j' \in [1,k+l+m]$. 
    If $j'=j$, by $A_2'$ no number has been added to $B_{j'}$ and the addition of FIX$(x_{j'})$ ensures that in the block the correct number is written. 
    If $j' \neq j$, between $0$ and $l$ numbers of the form VAR$(a, x_j)$ have been added because of $A_2'$. 
    This means that at position $j$ of the block a number from $[0,l \cdot |A|]$ is written. 
    Choose $d_j'$ for that value. 
    Then, the block has the desired form.\hfill$\blacksquare$
\end{proof}

We let $A_1 = \text{VAR}_1$, $A_2 = \text{VAR}_2$, and $A_3 = \text{VAR}_3 \cup \text{ATOM} \cup \text{NORM} \cup \text{FIX}$.
In order to choose subsets of the correct size, we need to choose $k$ numbers from VAR$_1$ and $l$ numbers from VAR$_2$. 
For the numbers from $A_3$ we need to make a case distinction:
\begin{enumerate}
    \item If a valid assignment is chosen from $A_2$, then according to Claim~\ref{lemma norm} the choice for $A_3$ is as follows:
    First, $m$ numbers from VAR are chosen. 
    Then, for every pair of variables, a number from ATOM is chosen. 
    This corresponds to $\frac{(k+l+m) \cdot (k+l+m-1)}{2}$ numbers. 
    Finally, $k+l+m$ numbers from NORM are chosen.
    \item If an invalid assignment is chosen from $A_2$, then according to Claim~\ref{lemma fix} the choice for $A_3$ also includes the FIX numbers. 
    Note that in each case the same number of variables must be chosen to reach $t$. 
    For that reason, let the difference between this cases be 
    \[
        s = (m+\frac{(k+l+m) \cdot (k+l+m-1)}{2} + k +l +m) - (k+l+m).
    \] 
    Observe that $s$ is greater than $0$ for all $k,l,m \geq 0$. 
    In that case, we need some further auxiliary numbers that can be chosen in the second case, to reach the same number of variables. 
    For all $i \in [0,s]$
    \begin{align*}
        \mathrm{WAIT}(i) \coloneqq 1\; \underbrace{0\dots0\; 0\dots0\; 0 \dots 0}_{\mathclap{i+n+(k+l+m)\cdot (k+l+m+1)}}\quad\mathrm{NOWAIT} \coloneqq \overbrace{1\dots 1}^{s+1}\; \underbrace{0\dots0\; 0\dots0\; 0 \dots 0}_{\mathclap{n+(k+l+m)\cdot (k+l+m+1)}}
    \end{align*}
    Observe that the sum of the WAIT$(i)$ equals NOWAIT. 
    These are the WAIT-numbers and are added to $A_3$. 
    We need one last adujstment of $t$ to 
    \[t'\coloneqq \overbrace{1\dots 1}^{s+1} t.\]
\end{enumerate}

\paragraph{The size of the basis.}
The base $D$ is intended to prevent overflows. 
We will now present out a worst-case analysis for a digit to show how an overflow can be prevented. 
In the proof, we choose $2\cdot(k+l+m) + \frac{(k+l+m) \cdot (k+l+m-1)}{2}+1$ numbers.
The highest value for one digit is $(l+1) \cdot |A|$.  
This allows us to choose 
\[D = \left(2\cdot(k+l+m) + \frac{(k+l+m) \cdot (k+l+m-1)}{2} + 1\right)\cdot (l+1) \cdot |A| +1.\]

\paragraph{The reduction function.}
We are ready to define the reduction function $f$. 
Let $f(\langle \mathcal{A},\varphi \rangle)$ be defined as
\[ \left\langle A_1,k,A_2,l,\tilde{A}_3,m+ \frac{(k+l+m) \cdot (k+l+m-1)}{2} + k+l+m+1,t'\right\rangle,\]
where $\tilde A_3 = A_3 \cup \mathrm{WAIT}$. We will now show, that $f$ is a correct reduction function.
Every digit is computable in O($|\langle \mathcal{A},\varphi\rangle|$) time. There are O($|\varphi|^2)$ digits in a number so every number is computable in O($|\langle \mathcal{A},\varphi\rangle|^3)$. There are O($|\langle \mathcal{A},\varphi\rangle|^4$) numbers computed, so $f$ is computable in O($|\langle \mathcal{A},\varphi\rangle|^7$).
Let $g$ be the computable function $g(x) = x^2 + 2\cdot x + 1$. The new parameter is bounded by $g$:
\[
    m + \frac{(k+l+m) \cdot (k+l+m-1)}{2} + k+l+m+1 \leq  |\varphi|^2 + 2 \cdot|\varphi| + 1 = g(|\varphi|).
\]
Finally, we turn towards the correctness property.
\begin{myclaim}\label{correcntessclaim}
It is true that \[
\langle \mathcal{A}, \varphi \rangle \in \mathrm{MC}(\simple\Sigma_3[2]) \Longleftrightarrow f(\langle \mathcal{A}, \varphi \rangle) \in  \pALTSSS.
\]    
\end{myclaim}
\begin{proof}[of Claim~\ref{correcntessclaim}]
    ``$\Longrightarrow$'': Let $\langle \mathcal{A}, \varphi \rangle$ be a positive instance. 
    Hence, $\mathcal A\models\varphi$. 
    According to Claim~\ref{lemma norm}, the choice of the subsets $A_1,A_2,A_3$ is correct and the sum $t$ is reached.
    Adding NOWAIT yields $t'$ in the correct choice of numbers.
    If $A_2'$ does not correspond to a valid assignment, then according to Claim~\ref{lemma fix} the sum $t$ can be reached and $t'$ can be reached by adding WAIT-numbers. 
    As a result, $f(\langle \mathcal{A}, \varphi \rangle) \in \pALTSSS$.

    ``$\Longleftarrow$'': We use contraposition to prove that direction. 
    Let $\langle \mathcal{A}, \varphi \rangle$ be a negative instance. 
    Then, for all assignments of $x_1,\dots,x_k$ there is an assignment of $x_{k+1},\dots,x_{k+l}$ such that for all assignments of $x_{k+l+1},$ $\dots,x_{k+l+m}$ the quantifier-free part, so at least one atomic formula, is not satisfied. 
    If subsets are chosen that obey Claim~\ref{lemma norm}, the sum $t$ cannot be reached. 
    If one chose subsets  from $A_1$ or $A_3$ that do not correspond to a valid assignment, then according to Claim~\ref{lemma kaputt} the sum $t$ cannot be reached by using only numbers from VAR, ATOM, NORM, and FIX. 
    That is because choosing FIX$(x_j)$ would create a number in the $B_j$-block that is larger than $(l+1) \cdot |A|$ and $t$ cannot be reached anymore.

    The only way to reach $1$ at every position of the $L$-blocks is to choose numbers from ATOM. 
    If a number of the form FIX$(x_j, d)$ is chosen, then the $B_j$-block would have a number that is larger than $(l+1) \cdot |A|$. 
    As a result, no number from ATOM can be used for that $j$.
    Furthermore, we cannot use overflows to achieve this. 
    The FIX-numbers are also insufficient to reach $t$. 
    Finally, as $t$ cannot be reached, $t'$ cannot be reached either, as the WAIT-numbers have no influence on these digits of $t$. 
    Hence, $f(\langle \mathcal{A}, \varphi \rangle) \notin \pALTSSS$.
    \hfill$\blacksquare$
\end{proof}
This results in showing that $\pALTSSS$ is $\A3$-hard.\hfill$\Box$
\end{proof}

\begin{proof}[of Theorem~\ref{sss vst}]
    By Lemmas~\ref{SSS3-hard} and \ref{SSS3-in} the result is proven.\hfill$\Box$
\end{proof}

We state a generalisation of the problem $\ALTSSS$. 
Also, we consider the complements of the problems of this generalisation.
For odd $\ell\in\mathbb N^+$, we define the problem $\ALTTSSS$ as follows:

\[
\left\{ \langle A_1,\dots,A_l,k_1,\dots,k_\ell,t \rangle\;\middle |\; \parbox{7cm}{$A_1,\dots, A_\ell\subseteq\mathbb{N}_0$, $k_1,\dots,k_\ell,t \in \mathbb{N}_0$, $\exists A'_1  \subseteq A_1$ s.t.\ $|A_1'| = k_1$, $\forall A'_2 \subseteq A_2$ s.t.\ $|A'_2| = k_2,\dots,\;\exists A_\ell' \subseteq A_\ell$ s.t.\ $|A_\ell'| = k_\ell$, and $\sum_{a \in A_1'} a + \dots + \sum_{a \in A_\ell'} a = t $ } \right\}
\]
Its parameterised version is $\pALTTSSS$ and is defined as
\[
(\ALTTSSS, \langle A_1,\dots A_l,k_1,\dots,k_\ell,t \rangle \mapsto \sum_{i \in [1,\ell]}k_i)
\]
Note that in the definition of the complement problem of $\ALTTSSS$ the quantification of the sets just flips (i.e., $\forall$ becomes $\exists$ and vice versa).
Analogously, $\pcoALTTSSS$ is defined.

\begin{corollary}
    Let $\ell \in \mathbb{N}^+$. 
    If $\ell$ is odd then $\pALTTSSS$ is $\A \ell$-complete under $\leqfpt$-reductions. 
    If $\ell$ is even then $\pcoALTTSSS$ is $\A \ell$-complete under $\leqfpt$-reductions.
\end{corollary}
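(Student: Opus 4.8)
The plan is to prove both halves of the statement — membership and hardness — for general $\ell$ by lifting the $\ell=3$ arguments, and to note that for even $\ell$ the whole construction is the mirror image of the odd one, which is exactly why the complement problem $\pcoALTTSSS$ replaces $\pALTTSSS$ there. For membership I would rerun the ARAM of Lemma~\ref{SSS3-in} with $\ell$ quantifier blocks instead of three: reserve registers for the at most $\sum_i k_i$ indices to be picked, for the offsets $|A_1|,\dots,|A_\ell|$ and for the running sum; guess $A_1',\dots,A_\ell'$ block by block through EXISTS/FORALL instructions issued in the order prescribed by the quantifier prefix; add the picked numbers into the sum register; and compare with $t$. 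This program is $\kappa$-bounded, tail-nondeterministic and performs exactly $\ell$ alternating guess-phases, so Theorem~\ref{ARAm = in A} gives $\pALTTSSS\in\A\ell$ for odd $\ell$; for even $\ell$ one only swaps the roles of EXISTS and FORALL, the number of alternations is unchanged, and the same theorem yields $\pcoALTTSSS\in\A\ell$.

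For hardness with $\ell$ odd I would generalise the reduction of Lemma~\ref{SSS3-hard} essentially verbatim, now from $\PMC(\simple\Sigma_\ell[2])$, which is $\A\ell$-complete by the binary simple form of Lemma~\ref{normalisation lemma}. Given $\langle\mathcal A,\varphi\rangle$ with $\varphi=\exists\mathbf x_1\forall\mathbf x_2\cdots\exists\mathbf x_\ell\bigwedge_{i=1}^{n}\lambda_i$ and $p:=\sum_i k_i$ variables altogether, I reuse the digit layout of Fig.~\ref{fig-numbers}: $n$ leading digits $L_1,\dots,L_n$ for the atoms followed by $p$ blocks $B_1,\dots,B_p$ of $p+1$ digits each for the variables. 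The numbers $\mathrm{VAR}(a,x_j)$, $\mathrm{ATOM}(a,b,x_j,x_{j'})$ and $\mathrm{NORM}(a,x_j)$ are defined as before, with the overflow-controlling constant $(l+1)\cdot|A|$ replaced by $(\max_i k_i+1)\cdot|A|$; the $\mathrm{VAR}$-numbers of a variable $x_j$ go into the set $A_r$ whose quantifier block binds $x_j$, while $\mathrm{ATOM}$ and $\mathrm{NORM}$ all go into the last, existential set $A_\ell$. Claims~\ref{lemma var}--\ref{lemma kaputt} then carry over word for word. The only genuinely new feature is that the repair step of Claim~\ref{lemma fix} (the $\mathrm{FIX}$- and $\mathrm{WAIT}$-numbers, which we again dump into $A_\ell$) must be replicated for each of the universal blocks $A_2,A_4,\dots,A_{\ell-1}$, and the $\mathrm{WAIT}$/$\mathrm{NOWAIT}$ padding — together with the matching leading $1$'s of the target $t'$ — must be wide enough to equalise the number of chosen elements of $A_\ell$ no matter which of the universal blocks happen to be filled by an invalid (double- or under-) assignment. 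The base $D$ is again chosen larger than (number of picked summands)$\,\cdot\,$(largest digit value), both of which are polynomial in $|\langle\mathcal A,\varphi\rangle|$; hence the map is an FPT-reduction whose new parameter is bounded by a fixed computable function of $|\varphi|$, and correctness follows exactly as in Claim~\ref{correcntessclaim}.

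For $\ell$ even the $\A\ell$-complete model-checking problem in simple normal form has the shape $\exists\mathbf x_1\forall\mathbf x_2\cdots\forall\mathbf x_\ell\bigvee_{i=1}^{n}\lambda_i$: the innermost block is universal and the quantifier-free part is a \emph{disjunction}. Passing to the complement turns this into a statement $\forall\mathbf x_1\exists\mathbf x_2\cdots\exists\mathbf x_\ell\bigwedge_i\overline{\lambda_i}$, whose quantifier prefix and conjunctive matrix are exactly the shape $\pcoALTTSSS$ provides — which is the structural reason the flipped problem, rather than $\pALTTSSS$, is the one tied to $\A\ell$ at even levels. I would therefore run the dual of the construction above: ``reaching $t$'' now certifies that \emph{no} atom is satisfied, the existential and universal roles (hence the placement of the $\mathrm{VAR}$-numbers, the sign with which the $\mathrm{ATOM}$-gadgets act on the $L_i$-digits, and the direction of the $\mathrm{FIX}$/$\mathrm{WAIT}$ repair) are interchanged accordingly, and the target digits are set so that a single satisfied atom overshoots rather than completing the sum. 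Membership together with this reduction yields the stated completeness of $\pcoALTTSSS$ for $\A\ell$.

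I expect the main obstacle to be the even case: one has to arrange the dualised $\mathrm{ATOM}$-gadgets so that they still transport the assignment-consistency information in the $B$-blocks while a single satisfied atom already makes $t$ unreachable, and at the same time make sure the adjusted target digits do not forbid exactly those summands that the consistency check still needs. A close second is the purely combinatorial check that one $\mathrm{WAIT}$/$\mathrm{NOWAIT}$ block suffices to equalise the summand count across all $\lfloor\ell/2\rfloor$ universal blocks simultaneously, independently of which of them are filled invalidly — this is where the reasoning behind Claims~\ref{lemma kaputt} and~\ref{lemma fix} has to be pushed the furthest.
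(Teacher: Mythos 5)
Your proposal follows essentially the same approach as the paper: membership via a straightforward $\ell$-alternating generalisation of the ARAM from Lemma~\ref{SSS3-in}, odd-$\ell$ hardness by distributing the $\mathrm{VAR}$-numbers over more sets, and even-$\ell$ hardness via the double-negation $\varphi\equiv\neg\,\forall\mathbf{x}_1\cdots\exists\mathbf{x}_\ell\bigwedge_i\neg\lambda_i$ so that the inner formula fits the odd construction (with $\neg\lambda_i$ feeding the $\mathrm{ATOM}$-gadgets) while the outer negation is absorbed by passing to $\pcoALTTSSS$. One small wrinkle in your description of the even case: saying a satisfied atom ``overshoots'' the target misstates the mechanism — with $\neg\lambda_i$ in the $\mathrm{ATOM}$-gadgets a satisfied atom makes the corresponding $L_i$-digit fall short of the all-ones target, not exceed it.
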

\begin{proof} 
    The membership of both problems can be shown analogously to the proof of Lemma~\ref{SSS3-in}. 
    In essence, the number of alternations needs to be taken into account. 
    If $\ell$ is odd, the construction from Lemma~\ref{SSS3-hard} can be adapted. 
    The VAR-numbers just need to be distributed over more sets. 
    If $\ell$ is even, we show that $\PMC(\simple\Sigma_l[2]) \leqfpt \pcoALTTSSS$.
    Let $\varphi$ be from a $\PMC(\simple\Sigma_l[2])$ instance, i.e., for $k = \sum_{i \in[1,k]} k_i$ and $x_{i,j} \in \{x_1,\dots,x_k\}$:
    \begin{align*}
        \varphi &= \exists x_1 \dots \forall x_k \bigvee_{i=1}^n \lambda_i(x_{i,1},x_{i,2})\\
        \Longleftrightarrow \varphi &= \neg \neg \exists x_1 \dots \forall x_k \bigvee_{i=1}^n \lambda_i(x_{i,1},x_{i,2}) \Longleftrightarrow \varphi = \neg \forall x_1 \dots \exists x_k \bigwedge_{i=1}^n \neg \lambda_i(x_{i,1},x_{i,2}) 
    \end{align*}
Now, as in the construction in the proof of Lemma~\ref{SSS3-hard}, the last quantifier is existentially. 
The negation in front of the formula is equivalent to not reaching the sum in $\coALTTSSS$. 
Furthermore, the evaluation of $\neg\lambda_i(x_{i,1},x_{i,2})$ in $\coALTTSSS$ is necessary for the ATOM-numbers.
Otherwise, it is a straightfoward adaption of the proof of Lemma~\ref{SSS3-hard}.\hfill$\square$
\end{proof}

\section{Conclusion}
In this paper, we gave a new and more natural characterisation of the $A$-hierarchy in terms of variants of the well known SUBSET-SUM problem. 
Intuitively, the problem is defined via alternations of existential and universal quantifiers that are used to quantify over subsets of natural numbers that have to correctly sum up to a given target value. 
We showed that the parameterised version (parameter sum of the sizes of the to be chosen subsets) of this problem is complete for every level of the $A$-hierarchy depending on the number of alternations.

Independent of its use in this paper, the SUBSET-SUM generalisation might be helpful in the context of counting problems.

\begin{credits}
\subsubsection{\ackname} The second author was partially supported by a grant of the German Research Foundation (DFG) under the project number 511769688 (ME 4279/3-1).

\subsubsection{\discintname}
The authors have no competing interests to declare that are
relevant to the content of this article.
\end{credits}

\bibliographystyle{splncs04}
\bibliography{refs.bib}

\end{document}